\newtheorem{theorem}{Theorem}[section]
\newtheorem{lemma}[theorem]{Lemma}
\newtheorem{proposition}[theorem]{Proposition}
\newtheorem{problem}[theorem]{Problem}
    \title{\LARGE \bf
Remote  Estimation for Markov Jump Linear Systems: \\ A Distributionally Robust Approach
}
\author{Ioannis Tzortzis, Themistoklis Charalambous   and Charalambos D. Charalambous
\thanks{The authors are with the Department
of Electrical and Computer Engineering, University of Cyprus, Nicosia, Cyprus. E-mails: (\{tzortzis.ioannis,\ charalambous.themistoklis,\ chadcha\}@ucy.ac.cy).  T. Charalambous is also with the Department of Electrical Engineering and Automation, School of Electrical Engineering, Aalto University, Espoo, 02150, Finland. }
\thanks{This work has been partly funded by MINERVA, a European Research Council (ERC) project funded under the European Union's Horizon 2022 research and innovation programme (Grant agreement No. 101044629).}
}
\begin{document}

\maketitle

\begin{abstract}
This paper considers the problem of remote state estimation for Markov jump linear systems in the presence of uncertainty in the posterior mode probabilities. Such uncertainty may arise when the estimator receives noisy or incomplete measurements over an unreliable communication network. To address this challenge, the estimation problem is formulated within a distributionally robust framework, where the true posterior is assumed to lie within a total variation distance  ball centered at the nominal posterior. The resulting minimax formulation yields an estimator that extends the classical MMSE solution with additional terms that account for mode uncertainty. A tractable implementation is developed using a distributionally robust variant of the first-order generalized pseudo-Bayesian  algorithm. A
numerical example is provided to illustrate the applicability and effectiveness of the  approach.
\end{abstract}


\section{Introduction}
\label{sec.Intro}
Markov jump linear systems (MJLS) model dynamical systems that switch between multiple modes, each governed by its own set of linear dynamics.  These systems have been widely studied due to their broad range of practical applications, particularly in state estimation \cite{logothetis1999expectation}. 
The goal is to jointly estimate the continuous-valued state and discrete mode sequence from noisy observations. Applications span areas such as target tracking \cite{mcginnity2000multiple,magnant2016bayesian,tzortzis2021linear}, signal processing \cite{liu2017state}, telecommunication \cite{yin2016distributed}, and fault detection \cite{zhao2016detection,hibey1999conditional}.
While computing the exact minimum mean-squared error (MMSE) estimator is intractable due to the exponential growth of mode hypotheses, suboptimal algorithms such as the generalized pseudo-Bayesian (GPB) and interacting multiple model (IMM) filters provide practical alternatives \cite{ahn2019bayesian,blom1988interacting,elliott2005state}.   These classical estimators typically assume perfect knowledge of the system’s underlying probability distributions.
In practice, however, estimation often occurs in networked control systems, where communication between sensors and estimators is subject to noise, delay, packet drops, or partial channel state information \cite{Tzortzis:2025,schenato2007foundations}. These limitations can introduce uncertainty in the posterior mode probabilities, even when the system model is  known. To ensure reliable estimation performance, distributionally robust methods are desirable.

Distributionally robust  approaches provide a modelling framework for optimal state estimation of MJLS by assuming that the true underlying probability distribution is unknown  and lies within an ambiguity set of probability distributions. 
Recent work has focused on robustness with respect to uncertain MC transition probabilities \cite{wang2023distributionally,jilkov2004online,doucet2002recursive,
li2016h,orguner2006online,zhang2009mode}, which is useful when the source of uncertainty is known. However, when no such prior information is available, ambiguity can instead be modeled directly at the posterior level, leading to alternative robust estimation methods.

Motivated by the above discussion, we consider the problem of distributionally robust state and mode estimation for MJLS,  with uncertainty placed on the estimator's posterior mode probabilities. Such uncertainty may arise due to
(a) limited or noisy observations, which affect the output/emission model;
(b) ambiguity in the prior mode probabilities, stemming from parametric uncertainty in the MC; or
(c) a combination of both.
This uncertainty affects the posterior  and, in turn,  the optimal state estimate.
To address this, we adopt a distributionally robust formulation in which the true posterior  lies within an ambiguity set defined by the total variation distance (TVD) metric.  The estimation problem is then formulated as a minimax optimization, where the estimator  minimizes the mean-squared error (MSE) against  the worst-case posterior distribution in the ambiguity set. The resulting estimator extends the classical MMSE estimator by incorporating  correction terms that depend on the level of uncertainty. 
For implementation, we introduce a tractable  distributionally robust variant 
of the GPB algorithm.

The remainder of this paper is organized as follows. Section \ref{sec:prob.form} formulates the distributionally robust state estimation problem within a Bayesian framework. Section \ref{sec:prob.sol} presents the optimal solution, including  a distributionally robust variant of the GPB algorithm. Section \ref{sec:sim.results} illustrates the proposed approach through a numerical example. Section \ref{sec:conclusions} concludes the paper.

\section{Problem Formulation}
\label{sec:prob.form}
\subsection{State and observation processes dynamics}
Consider the following discrete-time MJLS
\begin{subequations}\label{MJLS}
\begin{align}
x_{k}&=A(\theta_{k})x_{k-1}+B(\theta_{k})w_{k},\label{MJLS.a}\\
y_k&=C(\theta_k)x_k+D(\theta_k)v_k,\label{MJLS.b}
\end{align}
\end{subequations}
where $k\in \mathbb{N}:=\{1,2,\dots\}$ denotes the time index, $x_k\in  \mathbb{R}^{n_x}$ is the state vector process, and $y_k\in  \mathbb{R}^{n_y}$ is the observation vector process. The jump process $\{\theta_k, k\in \mathbb{N}\}$ is generated from a discrete time, homogeneous MC taking values in a finite space $\Theta=\{1,2,\dots,n_\theta\}$ with the $(i,j)-$th element of the transition probability matrix $\Pi$ equal to
\begin{equation}\label{MC}
\pi_{ij}:=P(\theta_{k}=j|\theta_{k-1}=i),\ \forall i,j\in \Theta.
\end{equation} 
The initial state $x_0\sim{\cal N}(x_0;\bar{x}_0,X_0)$ is normally distributed with mean $\bar{x}_0$ and covariance matrix $X_0$, and the noises $w_k\sim {\cal N}(w_k;0,W_k)$ and  $v_k\sim {\cal N}(v_k;0,V_k)$ are independent, identically distributed Gaussian processes. 
For each $\theta_k\in \Theta$, the real-valued matrices $A(\theta_k)\in \mathbb{R}^{n_x\times n_x}$, $B(\theta_k)\in \mathbb{R}^{n_x\times n_w}$, $C(\theta_k)\in \mathbb{R}^{n_y\times n_x}$ and $D(\theta_k)\in \mathbb{R}^{n_y\times n_v}$ are assumed to have bounded and measurable entries.
It is also assumed that $x_0$, $\{\theta_k\}$, $\{w_k\}$ and $\{v_k\}$ are mutually independent for all $k$.

The system is monitored by a remote estimator, which receives the observation sequence $y_{1:k}:=\{y_1,\dots,y_k\}$  over an unreliable  network. Based on these measurements, the estimator must recursively estimate both the hidden state $x_k$ and the mode $\theta_k$, neither of which is directly observable. Although the dynamics are linear within each mode, the dependence on the hidden mode sequence $\{\theta_k\}$ makes the overall estimation problem nonlinear.


\subsection{Bayesian estimation}
Bayesian estimation for MJLS consists of two steps, namely, the \textit{prediction step} for evaluating the joint prior conditional distribution
\begin{equation}\label{prior}
p(x_k,\theta_k|y_{1:k-1})=p(x_k|\theta_k,y_{1:k-1})p(\theta_k|y_{1:k-1}),
\end{equation}
and the \textit{correction step} for updating the joint posterior conditional distribution
\begin{equation}\label{posterior}
p(x_k,\theta_k|y_{1:k})=p(x_k|\theta_k,y_{1:k})p(\theta_k|y_{1:k}).
\end{equation}
At each  step $k-1$, we assume that the mode probabilities $p(\theta_{k-1}|y_{1:k-1})$ and the  posterior probability density functions (pdfs) $p(x_{k-1}|\theta_{k-1},y_{1:k-1})$ are available to the estimator from the previous filtering step.  
By applying the prediction and  correction steps,  one obtains both the mode probabilities  $p(\theta_k|y_{1:k})$ and the posterior pdfs  $p(x_k|\theta_k,y_{1:k})$.

\subsubsection{Prediction step} 
Using the Chapman-Kolmogorov equation, the prior conditional pdf of $x_k$ given $\theta_k$ and past  observations $y_{1:k-1}$ is
\begin{multline}\label{apriori_pdf}
p(x_k|\theta_k,y_{1:k-1})\\
=\int  p(x_k|x_{k-1}, \theta_{k})p(x_{k-1}|\theta_{k},y_{1:k-1})dx_{k-1},
\end{multline}
where  $p(x_k|x_{k-1}, \theta_{k})$ follows from the system dynamics. The term  $p(x_{k-1}|\theta_{k},y_{1:k-1})$ is evaluated by marginalizing over $\theta_{k-1}$ using the backward Bayes rule, while the prior mode probabilities are given by
\begin{equation}\label{prior.mode.prob}
p(\theta_k|y_{1:k-1})=\sum_{\theta_{k-1}}p(\theta_k|\theta_{k-1})p(\theta_{k-1}|y_{1:k-1}),
\end{equation}
where $p(\theta_k|\theta_{k-1})$ is specified by \eqref{MC}.

\subsubsection{Correction step}
The joint posterior conditional distribution \eqref{posterior} is specified by 
\begin{align}
p(x_k|\theta_k,y_{1:k})& \propto p(y_k|x_k,\theta_k)p(x_k|\theta_k,y_{1:k-1}),\label{mode.prob.aa}\\
p(\theta_k|y_{1:k})& \propto p(y_k|\theta_k,y_{1:k-1})p(\theta_k|y_{1:k-1}),\label{mode.prob}
\end{align}
where 
\begin{equation*}
p(y_k|\theta_k,y_{1:k-1})=\int p(y_k|x_k,\theta_k)p(x_k|\theta_k,y_{1:k-1})dx_k,
\end{equation*}
and $p(y_k|x_k,\theta_k)$ is specified by \eqref{MJLS.b}. Notice that, both \eqref{mode.prob.aa} and \eqref{mode.prob} are unnormalized expressions of the posterior pdfs.

\subsection{Optimality criterion and ambiguity set}
In practice, the nominal posterior mode probabilities $\mu_k(\theta_k):=p(\theta_k|y_{1:k})$ are computed by the remote estimator using the received measurements. However, when communication is unreliable due to packet drops, delays, sensor noise, or partial channel state information, the estimator's information may be incomplete or degraded. As a result, the computed posterior $\mu_k$ may deviate from the true posterior distribution over modes, which in turn affects the quality of the resulting state estimates. To account for such deviations, we adopt a distributionally robust formulation that models uncertainty in the posterior mode probabilities.

 Let $\varphi_k(y_{1:k})$ denote a function that maps the observation history $y_{1:k}$ to an estimate of the state $x_k$.  As an optimality criterion, we consider  the MSE cost 
\begin{align}\label{mse.cost}
V_k(\varphi,\mu)&=\mathbb{E}\big[||x_k-{\varphi}_k(y_{1:k})||^2\big]\nonumber\\
&:=\int ||x_k-{\varphi}_k(y_{1:k})||^2p(x_k|y_{1:k})dx_k,
 \end{align}
 where the marginal pdf $p(x_k|y_{1:k})$ is given by
 \begin{equation}\label{mse.prob}
p(x_k|y_{1:k})=\sum_{\theta_k}p(x_k|\theta_k,y_{1:k})\mu_k(\theta_k).
 \end{equation}
Substituting \eqref{mse.prob} into \eqref{mse.cost} yields the equivalent form
 \begin{equation}\label{nominal.cost}
V_k(\varphi,\mu)=\sum_{\theta_k}\mu_k(\theta_k)     \mathbb{E}\big[||x_k-{\varphi}_k(y_{1:k})||^2|\theta_k,y_{1:k}  \big].
\end{equation}
 
The  MMSE estimator  is $\hat{\varphi}_k(y_{1:k})=\mathbb{E}[x_k|y_{1:k}]$.
However, from \eqref{nominal.cost}, it is clear that the optimal estimate depends on the posterior mode probabilities $\mu_k(\theta_k)$ which may be uncertain, as discussed earlier. To reduce the influence of uncertainty on the performance of the optimal estimator, we quantify any uncertainty regarding the posterior mode probabilities $\mu_k(\theta_k)$ (hereafter referred to as the \emph{nominal} posterior mode probabilities) by a collection of posterior mode probabilities $\nu_k(\theta_k):=\bar{p}(\theta_k|y_{1:k})$ (hereafter referred to as the  \emph{true} posterior mode probabilities) lying within an ambiguity set defined by the TVD metric, as described below.


\emph{Ambiguity set:} Let $\mathbb{P}(\Theta)$ denote the set of all probability distributions over the finite mode set $\Theta$.
The ambiguity set consists of all possible true posterior mode probabilities $\nu_k\in \mathbb{P}(\Theta)$ that lie within a ball of radius $R_{TV}(k)\in [0,1]$, centered at the nominal distribution $\mu_k$, and defined using the TVD metric as
\begin{multline}\label{ambiguity.class}
\mathbb{B}_k(\mu_k):=\big\{\nu_k\in \mathbb{P}(\Theta):\\ \frac{1}{2}\sum_{i\in \Theta}|\nu_k(\theta_k=i)-\mu_k(\theta_k=i)|\leq R_{TV}(k)\big\}.
\end{multline}

\emph{Distributionally robust criterion:} The distributionally robust counterpart of the MSE cost \eqref{mse.cost} is given by
\begin{equation}\label{mse.DRO.cost}
V_k(\varphi,\nu)=\sum_{\theta_k}\nu_k(\theta_k)     \mathbb{E}\big[||x_k-{\varphi}_k(y_{1:k})||^2|\theta_k,y_{1:k}  \big],
\end{equation}
where the expectation is taken with respect to $p(x_k|\theta_k,y_{1:k})$.

\subsection{DRO Problem}
\begin{problem}
Find an optimal estimator $\hat{\varphi}_k(y_{1:k})\in \Phi$ and the worst-case posterior mode probabilities $\nu^*_k(\theta_k)\in \mathbb{B}_k(\mu_k)$ that solve the minimax problem
\begin{equation}\label{DRO1}
\min_{\varphi_k\in \Phi}\max_{\nu_k\in \mathbb{B}_k(\mu_k)} V_k(\varphi,\nu),
\end{equation}
where $V_k(\varphi,\nu)$ is defined in \eqref{mse.DRO.cost}, and  $\Phi$ denotes the set of all admissible estimators of $x_k$.
\end{problem}


\section{Optimal Solution}
\label{sec:prob.sol}

\subsection{Solution to the inner optimization problem}
\subsubsection{Maximization of a linear functional} 
The distributionally robust criterion \eqref{mse.DRO.cost} can be rewritten as 
\begin{align}\label{expected.term}
    V_k(\varphi,\nu)&=\sum_{\theta_k} \nu_k(\theta_k)\int ||x_k{-}{\varphi}_k(y_{1:k})||^2p(x_k|\theta_k,y_{1:k})dx_k\nonumber\\
    &=:\sum_{\theta_k} \nu_k(\theta_k)L_k^\varphi (\theta_k).
\end{align}
The superscript in $L_k^\varphi(\cdot)$ indicates its dependence  on the estimator $\varphi_k\in \Phi$. 
We arrange the values of $L_k^\varphi(\theta_k)$ for all modes $\theta_k\in \Theta$ in a descending order. Let
\begin{equation*}
L_{\max}^\varphi(k):=\max_{\theta_k\in \Theta}L_k^\varphi(\theta_k),\quad
L_{\min}^\varphi(k):=\min_{\theta_k\in \Theta}L_k^\varphi(\theta_k),
\end{equation*}
and define $\Theta^0(k)$ and $\Theta_0(k)$ as the sets of modes attaining $L_{\max}^\varphi(k)$ and $L_{\min}^\varphi(k)$, respectively. The remaining modes are grouped into sets $\Theta_j(k)$, $j=1,2,\dots,r$, according to this descending order, with
\begin{equation}
L^\varphi_{\Theta_j}(k):=\min_{\theta_k\in \Theta\setminus \{\Theta^0(k)\cup (\cup_{s=1}^j \Theta_{s-1}(k))\}}L_k^\varphi(\theta_k).
\end{equation}

In the next theorem,   $(d)^+:=\max\{0,d\}$.
\begin{theorem}\label{main.theorem}
The optimal solution to the inner optimization problem in \eqref{DRO1} is given by
\begin{align}\label{payoff}
V_{k}(\varphi,\nu^*)&=L^\varphi_{\max}(k)\nu^*_k(\Theta^0(k)) + L^\varphi_{\min}(k)\nu^*_k(\Theta_0(k)) \nonumber\\
&\quad + \sum_{s=1}^r L^\varphi_{\Theta_s}(k)\nu^*_k(\Theta_s(k)).    
\end{align}
where the maximizing mode probabilities $\nu_k^*(\cdot)\in \mathbb{B}_k(\mu_k)$ are  given by
\begin{subequations}\label{max.distr}
\begin{align}
&\nu^*_k(\Theta^0(k))=\sum_{\theta_k\in \Theta^0(k)}\mu_k(\theta_k)+\alpha(k),\label{max.1a}\\
&\nu^*_k(\Theta_0(k))=\big(\sum_{\theta_k\in \Theta_0(k)}\mu_k(\theta_k)-\alpha(k)\big)^+,\label{max.1b}\\
&\nu^*_k(\Theta_s(k))=\big(\sum_{\theta_k\in \Theta_s(k)}\mu_k(\theta_k)\nonumber\\
&\qquad \qquad\ -\big(\alpha(k)-\sum_{j=1}^s\sum_{\theta_k\in \Theta_{j-1}(k)}\mu_k(\theta_k)\big)^+\big)^+,\label{max.1c}\\
& \mbox{for all $s=1,2,\dots,r$, and}\nonumber\\
&\alpha(k)=\min\big(R_{TV}(k),1-\sum_{\theta_k\in \Theta^0(k)}\mu_k(\theta_k)\big).\label{max.1d}
\end{align}
\end{subequations}
\end{theorem}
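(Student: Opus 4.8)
The plan is to read the inner problem as a linear program: the objective \eqref{expected.term} is linear in $\nu_k$, and the feasible set $\mathbb{B}_k(\mu_k)$ is the intersection of the probability simplex with a total-variation ball. To expose its structure I would introduce the perturbation $\eta_k(\theta_k):=\nu_k(\theta_k)-\mu_k(\theta_k)$, under which feasibility of $\nu_k\in\mathbb{B}_k(\mu_k)$ becomes the three conditions $\sum_{\theta_k}\eta_k(\theta_k)=0$, $\eta_k(\theta_k)\ge-\mu_k(\theta_k)$, and $\tfrac12\sum_{\theta_k}|\eta_k(\theta_k)|\le R_{TV}(k)$. The objective splits as $\sum_{\theta_k}\mu_k(\theta_k)L_k^\varphi(\theta_k)+\sum_{\theta_k}\eta_k(\theta_k)L_k^\varphi(\theta_k)$; since the first term is a constant, the task reduces to maximizing the linear functional $\sum_{\theta_k}\eta_k(\theta_k)L_k^\varphi(\theta_k)$ over the admissible $\eta_k$.

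The central idea is a water-filling argument. Writing $\alpha(k):=\sum_{\theta_k:\eta_k(\theta_k)>0}\eta_k(\theta_k)$ and using $\sum_{\theta_k}\eta_k(\theta_k)=0$ to identify this with the total withdrawn mass, the total-variation constraint is exactly $\alpha(k)\le R_{TV}(k)$. For a fixed budget $\alpha(k)$, linearity forces the positive part of $\eta_k$ onto the highest-payoff modes $\Theta^0(k)$ and the negative part onto the lowest-payoff modes, withdrawn greedily by first emptying $\Theta_0(k)$, then $\Theta_1(k)$, and so on, subject to the floor $\eta_k(\theta_k)\ge-\mu_k(\theta_k)$. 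I would justify this concentration by an exchange argument: relocating any added mass onto $\Theta^0(k)$, or shifting any withdrawal onto a strictly lower-payoff mode that still carries mass, never decreases the objective. This nested depletion, in which $\Theta_s(k)$ surrenders only the residual budget remaining after the lower sets are exhausted, is precisely what the cascading $(\cdot)^+$ expressions in \eqref{max.1b}--\eqref{max.1c} encode, while the deposited mass yields \eqref{max.1a}.

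It then remains to optimize over the budget $\alpha(k)$ itself. Each additional unit of transported mass raises the objective by $L^\varphi_{\max}(k)$ and lowers it by the payoff of the mode currently being depleted; since every payoff is at most $L^\varphi_{\max}(k)$, this marginal change is nonnegative, so the objective is nondecreasing in $\alpha(k)$ and the budget should be pushed as far as feasibility allows. Two ceilings compete: the radius $R_{TV}(k)$, and the simplex requirement $\nu_k^*(\Theta^0(k))\le1$, equivalently that no more than the off-peak mass $1-\sum_{\theta_k\in\Theta^0(k)}\mu_k(\theta_k)$ can ever be withdrawn. Taking the smaller of the two gives $\alpha(k)$ as in \eqref{max.1d}. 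Substituting the resulting $\nu_k^*$ into \eqref{expected.term} and grouping modes by their shared payoff values reproduces \eqref{payoff}, after which a direct check of nonnegativity, the sum-to-one condition, and the total-variation bound confirms $\nu_k^*\in\mathbb{B}_k(\mu_k)$.

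I expect the main obstacle to be the bookkeeping of the greedy withdrawal and its boundary cases: verifying that the nested maxima correctly implement ``take the remaining budget from the next-lowest set,'' including the degenerate situations where the budget runs out midway through a set or saturates at $1-\sum_{\theta_k\in\Theta^0(k)}\mu_k(\theta_k)$ so that all off-peak mass is stripped. The exchange-plus-monotonicity argument sketched above resolves these uniformly, and one could additionally certify global optimality via complementary slackness by exhibiting the dual variable $L^\varphi_{\max}(k)$ for the normalization constraint together with a multiplier for the total-variation constraint, thereby avoiding any explicit enumeration of cases.
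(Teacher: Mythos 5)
Your proposal is correct and in substance coincides with the paper's own route: the paper proves Theorem \ref{main.theorem} by citing \cite[Theorem 4.1]{ctlthem:2013}, whose argument is precisely the water-filling/exchange construction you describe --- add the budget $\alpha(k)$ of mass to $\Theta^0(k)$, withdraw it greedily from $\Theta_0(k)$ upward subject to the floor $\nu_k\geq 0$, and saturate the budget at $\min\big(R_{TV}(k),\,1-\sum_{\theta_k\in \Theta^0(k)}\mu_k(\theta_k)\big)$. Your finite-alphabet linear-programming rendition via the perturbation $\eta_k=\nu_k-\mu_k$ is simply a self-contained version of that cited argument.
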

\begin{proof}
See \cite[Theorem 4.1]{ctlthem:2013}.
\end{proof}

The key feature of \eqref{max.distr} is its explicit characterization as a water-filling solution with respect to the nominal posterior mode probabilities $\mu_k(\theta_k)$ and the TVD parameter $R_{TV}(k)$. 

\subsubsection{Equivalent solution} 
By utilizing the results of Theorem \ref{main.theorem},  an equivalent expression for \eqref{payoff} can be provided. 
\begin{theorem}
The optimal solution to the inner optimization problem of \eqref{DRO1} is equivalently given by
\begin{multline}\label{equiv.expr}
V_{k}(\varphi,\nu^*)= \sum_{\theta_k\in \Theta}\mu_k(\theta_k)L_k^\varphi (\theta_k)+\beta(\alpha(k))\\
+\alpha(k)\big(L^\varphi_{\max}(k)-L_{\widetilde{\Theta}}^\varphi(k)\big),
\end{multline}
where $\alpha(k)$ is defined in \eqref{max.1d}. The  set of modes $\widetilde{\Theta}$ and the function $\beta(\alpha(k))$ are calculated as follows.

\emph{Case.1.} If 
\begin{equation}\label{condition.1}
\alpha(k)\leq \sum_{\theta_k\in \Theta_0(k)}\mu_k(\theta_k),
\end{equation}
then 
$\widetilde{\Theta}=\Theta_0\ \mbox{and}\  \beta(\alpha(k))=0$.

\emph{Case.2.} For $z\in \{1,2,\dots,r\}$ if
 \begin{equation} \label{condition.2}
 \sum_{\theta_k\in \cup_{l=0}^{z-1} \Theta_l(k)}\mu_k(\theta_k) \leq \alpha(k)\leq \sum_{\theta_k\in \cup_{l=0}^z \Theta_l(k)}\mu_k(\theta_k),
 \end{equation} 
then $\widetilde{\Theta}=\Theta_z$ and
\begin{equation*}
\beta(\alpha(k))= \sum_{s=0}^{z-1} \Big(\sum_{\theta_k\in \Theta_s(k)} \big(L^\varphi_{\Theta_z}(k)-L^\varphi_{\Theta_s}(k)\big)\mu_k(\theta_k)\Big).
\end{equation*}
\end{theorem}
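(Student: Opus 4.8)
The plan is to substitute the explicit water-filling probabilities \eqref{max.distr} of Theorem~\ref{main.theorem} into the optimal value \eqref{payoff}, and then reorganize the outcome as the nominal cost plus the two correction terms. As a preliminary step I would record that $L_k^\varphi(\theta_k)$ is constant on each group $\Theta^0(k),\Theta_0(k),\Theta_1(k),\dots,\Theta_r(k)$, taking there the values $L^\varphi_{\max}(k),L^\varphi_{\min}(k),L^\varphi_{\Theta_1}(k),\dots$, so that with the aggregated masses $m^0:=\sum_{\theta_k\in\Theta^0(k)}\mu_k(\theta_k)$ and $m_s:=\sum_{\theta_k\in\Theta_s(k)}\mu_k(\theta_k)$ (and $L^\varphi_{\Theta_0}(k)=L^\varphi_{\min}(k)$) the nominal term reads
\begin{equation*}
\sum_{\theta_k\in\Theta}\mu_k(\theta_k)L_k^\varphi(\theta_k)=L^\varphi_{\max}(k)\,m^0+\sum_{s=0}^{r}L^\varphi_{\Theta_s}(k)\,m_s.
\end{equation*}
I would then form the difference $V_k(\varphi,\nu^*)-\sum_{\theta_k}\mu_k(\theta_k)L_k^\varphi(\theta_k)$ and show, case by case, that it equals $\beta(\alpha(k))+\alpha(k)\big(L^\varphi_{\max}(k)-L^\varphi_{\widetilde{\Theta}}(k)\big)$.

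The heart of the argument is to resolve the nested positive-part operators in \eqref{max.1b} and \eqref{max.1c}, which encode the water-filling rule: the excess mass $\alpha(k)$ added to $\Theta^0(k)$ is drawn from the lowest-cost groups first, draining $\Theta_0(k),\Theta_1(k),\dots$ in ascending order of $L^\varphi$. For each group I would decide, from the cumulative masses $\sum_{l=0}^{s}m_l$, whether it is fully drained ($\nu^*_k=0$), untouched ($\nu^*_k=m_s$), or partially drained. In Case~1 the hypothesis \eqref{condition.1} forces $\alpha(k)\le m_0$, so only $\Theta_0(k)$ loses mass while every $\Theta_s(k)$ with $s\ge 1$ keeps its nominal value; the difference then collapses to $\alpha(k)\big(L^\varphi_{\max}(k)-L^\varphi_{\min}(k)\big)$, which is precisely the asserted form with $\widetilde{\Theta}=\Theta_0$ and $\beta(\alpha(k))=0$.

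In Case~2 the hypothesis \eqref{condition.2} places $\alpha(k)$ between $\sum_{l=0}^{z-1}m_l$ and $\sum_{l=0}^{z}m_l$, so $\Theta_0(k),\dots,\Theta_{z-1}(k)$ are fully drained to zero, $\Theta_z(k)$ is partially drained to $m_z-\big(\alpha(k)-\sum_{l=0}^{z-1}m_l\big)$, and the groups with $s>z$ are untouched. Substituting these values, the net contribution of the modified groups to the difference is
\begin{equation*}
L^\varphi_{\max}(k)\,\alpha(k)-\sum_{s=0}^{z-1}L^\varphi_{\Theta_s}(k)\,m_s-L^\varphi_{\Theta_z}(k)\Big(\alpha(k)-\sum_{l=0}^{z-1}m_l\Big),
\end{equation*}
and regrouping the $\alpha(k)$-terms against the mass-terms yields $\alpha(k)\big(L^\varphi_{\max}(k)-L^\varphi_{\Theta_z}(k)\big)+\sum_{s=0}^{z-1}\big(L^\varphi_{\Theta_z}(k)-L^\varphi_{\Theta_s}(k)\big)m_s$, which is exactly $\alpha(k)\big(L^\varphi_{\max}(k)-L^\varphi_{\widetilde{\Theta}}(k)\big)+\beta(\alpha(k))$ with $\widetilde{\Theta}=\Theta_z$.

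I expect the main obstacle to be the bookkeeping of the nested $(\cdot)^+$ operators: for each group index $s$ one must verify that the arguments of both the inner and the outer positive parts have the correct sign under \eqref{condition.1}/\eqref{condition.2}, so that the operators can be removed and the fully-drained/partially-drained/untouched trichotomy emerges cleanly. A secondary point worth checking is that the cases are exhaustive: since $\alpha(k)=\min(R_{TV}(k),1-m^0)$ from \eqref{max.1d} and the total drainable mass is $\sum_{s\ge 0}m_s=1-m^0$, the value $\alpha(k)$ can never exceed the mass available outside $\Theta^0(k)$, so it always falls into either Case~1 or some Case~2 index $z\in\{1,\dots,r\}$.
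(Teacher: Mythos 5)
Your proof is correct, and it supplies what the paper itself omits: the paper's ``proof'' of this theorem is only a pointer to Section IV of \cite{Tzortzis9157997}, and your direct derivation---substituting the water-filling distribution \eqref{max.distr} of Theorem \ref{main.theorem} into \eqref{payoff}, resolving the nested $(\cdot)^+$ operators under \eqref{condition.1} and \eqref{condition.2}, and regrouping against the nominal cost $\sum_{\theta_k}\mu_k(\theta_k)L_k^\varphi(\theta_k)$---is exactly the kind of argument that reference carries out. Your case analysis is sound: under \eqref{condition.1} only $\Theta_0(k)$ loses mass and the excess collapses to $\alpha(k)\big(L^\varphi_{\max}(k)-L^\varphi_{\min}(k)\big)$ with $\beta=0$, while under \eqref{condition.2} the groups $\Theta_0(k),\dots,\Theta_{z-1}(k)$ are emptied, $\Theta_z(k)$ loses $\alpha(k)-\sum_{l=0}^{z-1}\sum_{\theta_k\in\Theta_l(k)}\mu_k(\theta_k)$, the groups above $z$ are untouched, and your regrouping reproduces $\beta(\alpha(k))+\alpha(k)\big(L^\varphi_{\max}(k)-L^\varphi_{\Theta_z}(k)\big)$ term for term. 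Your two side checks---that $L_k^\varphi$ is constant on each group (which is what the paper's descending-order grouping means) and that $\alpha(k)\leq 1-\sum_{\theta_k\in\Theta^0(k)}\mu_k(\theta_k)$ makes the cases exhaustive---are precisely the points that need nailing down, and both hold.
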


\begin{proof}
See \cite[Section IV]{Tzortzis9157997}.
\end{proof}

The equivalent expression \eqref{equiv.expr} provides an intuitive interpretation of $V_k(\varphi,\nu^*)$.  Compared to the classical case, where only the first term appears, the distributionally robust solution includes extra terms that allow the estimator to compensate for the possibility of uncertain scenarios. These additional terms capture the difference between the maximum and minimum values of $L_k^\varphi(\theta_k)$ across mode sets, scaled by  $\alpha(k)$, which specifies the size of the ambiguity set $\mathbb{B}_k(\mu_k)$.



A key consideration is the role of the TVD radius $R_{TV}(k)$, which determines the level of conservatism of the estimator. Larger values expand the ambiguity set $\mathbb{B}_k(\mu_k)$ and yield more conservative solutions, while smaller values reduce conservatism. The choice of $R_{TV}(k)$ therefore influences the trade-off between robustness and performance.


Let $D_k:=V_k(\varphi,\nu^*)$. From equation \eqref{equiv.expr}, under condition \eqref{condition.1}, we obtain
\begin{equation}\label{alpha.cond1}
\alpha(k)=\frac{D_k-\sum_{\theta_k\in \Theta}\mu_k(\theta_k)L_k^\varphi(\theta_k)}{L_{\max}^\varphi(k)-L_{\min}^\varphi(k)}.
\end{equation}
Substituting \eqref{alpha.cond1} back into \eqref{condition.1}, and solving for $D_k$, we get that \eqref{alpha.cond1} holds for
\begin{multline}
D_k\leq \sum_{\theta_k\in \Theta}\mu_k(\theta_k)L_k^\varphi (\theta_k)\\
+\big\{L^\varphi_{\max}(k)-L^\varphi_{\min}(k)\big\}\sum_{\theta_k\in \Theta_0(k)}\mu_k(\theta_k).
\end{multline}
Similarly,  under condition \eqref{condition.2}, equation \eqref{equiv.expr} becomes 
\begin{align}
V_k(\varphi,\nu^*)&=\sum_{\theta_k\in \Theta}\mu_k(\theta_k)L_k^\varphi (\theta_k){+}\alpha(k)\big\{L^\varphi_{\max}(k){-}L^\varphi_{\Theta_z}(k)\big\}\nonumber\\
 &+\sum_{s=0}^{z-1}\sum_{\theta_k\in \Theta_s(k)}\mu_k(\theta_k)\big\{L^\varphi_{\Theta_z}(k){-}L^\varphi_{\Theta_s}(k)\big\}.
\end{align}
It follows that,
\begin{equation}\label{alpha.cond2}
\alpha(k)=\frac{D_k-\Lambda(k)}{L_{\max}^\varphi(k)-L_{\Theta_z}^\varphi(k)},
\end{equation}
where 
\begin{align*}
&\Lambda(k)=L_{\max}^\varphi(k)\sum_{\theta_k\in \Theta^0(k)}\mu_k(\theta_k)\\
& {-}L^\varphi_{\Theta_z}(k)\sum_{s=0}^{z-1}\sum_{\theta_k\in \Theta_s(k)}\mu_k(\theta_k){-}\sum_{s=z}^{r}\sum_{\theta_k\in \Theta_s(k)}\mu_k(\theta_k)L^\varphi_{\Theta_s}(k).
\end{align*}
Substituting \eqref{alpha.cond2} into \eqref{condition.2} and solving for $D_k$, we get that \eqref{alpha.cond2} holds for
\begin{align*}
D_k&\leq L_{\max}^\varphi(k)\big(\sum_{s=0}^{z-1}\sum_{\theta_k\in\Theta_s(k)}\mu_k(\theta_k)+\sum_{\theta_k\in \Theta^0(k)}\mu_k(\theta_k)\big)\\
&\quad +\sum_{s=z-1}^r\sum_{\theta_k\in\Theta_s(k)}\mu_k(\theta_k)L_{\Theta_s}^\varphi(k),\\
D_k&\geq L_{\max}^\varphi(k)\big(\sum_{s=0}^{z-1}\sum_{\theta_k\in\Theta_s(k)}\mu_k(\theta_k)+\sum_{\theta_k\in \Theta^0(k)}\mu_k(\theta_k)\big)\\
&\quad +\sum_{s=z}^r\sum_{\theta_k\in\Theta_s(k)}\mu_k(\theta_k)L_{\Theta_s}^\varphi(k).
\end{align*}
This analysis  illustrates the relationship between the size of the ambiguity set $\mathbb{B}_k(\mu_k)$ specified by $\alpha(k)$, and the solution of the inner optimization in \eqref{DRO1}.

\subsection{Solution to the outer optimization problem}

\subsubsection{Duality}
\begin{lemma} DRO problem \eqref{DRO1} can be equivalently written as
\begin{equation}\label{Equiv_DRO1}
\max_{\nu_k\in \mathbb{B}_k(\mu_k)} \min_{\varphi_k\in\Phi}V_k(\varphi,\nu).
\end{equation}
Problems \eqref{DRO1} and \eqref{Equiv_DRO1} are dual of each other, and the optimal pair $(\varphi_k^*,\nu_k^*)$ is called an equilibrium solution.
\end{lemma}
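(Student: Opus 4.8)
The plan is to establish \eqref{Equiv_DRO1} as a saddle-point (strong duality) statement by verifying the hypotheses of a minimax theorem of Sion--von~Neumann type. The two decision variables are the estimator $\varphi_k$, ranging over the convex set $\Phi$, and the posterior $\nu_k$, ranging over $\mathbb{B}_k(\mu_k)$. The pivotal structural facts are: (i) the feasible set $\mathbb{B}_k(\mu_k)$ is convex and \emph{compact}, being the intersection of the finite-dimensional probability simplex $\mathbb{P}(\Theta)$ with the TVD ball $\{\nu_k:\tfrac12\sum_{i}|\nu_k(i)-\mu_k(i)|\le R_{TV}(k)\}$; and (ii) from the representation $V_k(\varphi,\nu)=\sum_{\theta_k}\nu_k(\theta_k)L_k^\varphi(\theta_k)$ in \eqref{expected.term}, the payoff is \emph{linear} (hence concave and upper semicontinuous) in $\nu_k$ for each fixed $\varphi_k$, and \emph{convex} (hence lower semicontinuous) in $\varphi_k$ for each fixed $\nu_k$, the latter because each $L_k^\varphi(\theta_k)=\int\|x_k-\varphi_k(y_{1:k})\|^2 p(x_k|\theta_k,y_{1:k})\,dx_k$ is a convex quadratic functional of $\varphi_k$ weighted by $\nu_k(\theta_k)\ge 0$. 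Since $\mathbb{B}_k(\mu_k)$ is compact, these conditions already suffice for the minimax theorem to give the interchange $\min_{\varphi_k}\max_{\nu_k}=\max_{\nu_k}\min_{\varphi_k}$.

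To make attainment transparent I would reduce the outer problem to a finite-dimensional one. Because the cost \eqref{mse.DRO.cost} is conditioned on $y_{1:k}$ and the admissible estimate enters only through the single vector $\hat x:=\varphi_k(y_{1:k})\in\mathbb{R}^{n_x}$, the minimization decouples pointwise over observation realizations: for each fixed $y_{1:k}$ one solves
\begin{equation*}
\min_{\hat x\in\mathbb{R}^{n_x}}\ \max_{\nu_k\in\mathbb{B}_k(\mu_k)}\ \sum_{\theta_k}\nu_k(\theta_k)\int\|x_k-\hat x\|^2 p(x_k|\theta_k,y_{1:k})\,dx_k.
\end{equation*}
The inner objective is coercive in $\hat x$ (it grows without bound as $\|\hat x\|\to\infty$, uniformly over $\nu_k$), so the minimization may be restricted without loss to a sufficiently large closed ball, which is convex and compact. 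With both feasible sets now convex and compact and the convex--concave continuity properties in hand, the minimax theorem applies for every $y_{1:k}$, and reassembling over the observation history yields \eqref{Equiv_DRO1}.

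It then remains to exhibit the equilibrium pair $(\varphi_k^*,\nu_k^*)$. The inner supremum is attained by the water-filling distribution $\nu_k^*$ of Theorem~\ref{main.theorem}, while the reduced inner minimization is attained by the minimizer of the convex quadratic, namely the $\nu_k$-weighted conditional mean $\hat x^{*}=\sum_{\theta_k}\nu_k(\theta_k)\,\mathbb{E}[x_k|\theta_k,y_{1:k}]$, obtained by setting the gradient to zero and using $\sum_{\theta_k}\nu_k(\theta_k)=1$; this is precisely the MMSE estimator evaluated under the posterior $\nu_k$. Since the common optimal value is achieved simultaneously on both sides, $(\varphi_k^*,\nu_k^*)$ is a saddle point of $V_k$, i.e.\ an equilibrium solution.

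The main obstacle I anticipate is justifying the interchange at the level of the function space $\Phi$ rather than merely pointwise: the convexity and compactness of $\mathbb{B}_k(\mu_k)$ are immediate, but one must argue that the pointwise decoupling is measurability-preserving, so that the assembled maximizer $\nu_k^*(\cdot)$ and minimizer $\hat x^{*}(\cdot)$ are genuinely admissible elements of $\Phi$ and $\mathbb{B}_k(\mu_k)$, and that the coercivity-based restriction to a compact $\hat x$-domain is uniform enough to license the swap globally. Once this reduction to a compact finite-dimensional convex--concave saddle problem is made rigorous, the duality in \eqref{Equiv_DRO1} follows directly from the minimax theorem together with the linearity of $V_k$ in $\nu_k$.
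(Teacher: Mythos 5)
Your proposal is correct and follows essentially the same route as the paper: both establish \eqref{Equiv_DRO1} by invoking a minimax theorem of Sion--von~Neumann type, using that $V_k(\varphi,\nu)$ is convex (quadratic) in $\varphi_k$ and concave (linear) in $\nu_k$ over the convex set $\mathbb{B}_k(\mu_k)$. Your version supplies details the paper leaves implicit---compactness of the TVD ball, the pointwise reduction of $\varphi_k$ to a vector $\hat x\in\mathbb{R}^{n_x}$, coercivity, and explicit attainment of the saddle point---but these are refinements of the same argument rather than a different one.
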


\begin{proof} 
The result follows from the strong minimax theorem, since $V_k(\varphi,\nu)$ is a continuous, concave--convex function. Specifically, (i) for fixed $\nu$, $V_k(\cdot,\nu)$ is convex because it is quadratic in $\varphi\in\Phi$, and (ii) for fixed $\varphi$, $V_k(\varphi,\cdot)$ is concave because it is linear in $\nu_k\in \mathbb{B}_k(\mu_k)$.
\end{proof}

\subsubsection{Optimal estimator}
The above duality relation allow us to solve the inner optimization in \eqref{Equiv_DRO1} for every $\nu_k\in \mathbb{B}_k(\mu_k)$.

\begin{proposition}
The solution of the inner optimization in \eqref{Equiv_DRO1} is the conditional expectation
\begin{equation}\label{opt.estimator}
\hat{\varphi}(y_{1:k})=\mathbb{E}\big[x_k|y_{1:k}\big]=\int x_kp^{\nu^*}(x_k|y_{1:k})dx_k,
\end{equation}
where $p^{\nu^*}(\cdot)$ indicates the dependence on the maximizing posterior mode probability. The corresponding MMSE is the conditional variance of $x_k$ given the  measurements $y_{1:k}$.
\end{proposition}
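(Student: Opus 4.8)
The plan is to recognize the inner minimization in \eqref{Equiv_DRO1} as an ordinary MMSE problem taken against a fixed mixture distribution, and then to apply the orthogonality principle. First I would observe that, for any fixed $\nu_k\in\mathbb{B}_k(\mu_k)$, the cost \eqref{mse.DRO.cost} collapses into a single integral against the $\nu$-weighted marginal density
\[
p^{\nu}(x_k|y_{1:k})=\sum_{\theta_k}p(x_k|\theta_k,y_{1:k})\,\nu_k(\theta_k),
\]
in exact analogy with \eqref{mse.prob}, but with $\nu_k$ replacing $\mu_k$. This yields $V_k(\varphi,\nu)=\int \|x_k-\varphi_k(y_{1:k})\|^2\,p^{\nu}(x_k|y_{1:k})\,dx_k$, so the inner problem reduces to minimizing a mean-squared error over the admissible class $\Phi$ with the mixing weights held fixed.

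Next, since every $\varphi_k\in\Phi$ depends on the observation history $y_{1:k}$ only, the minimization can be carried out pointwise in $y_{1:k}$: for each realization it suffices to pick a vector $c$ minimizing $\int \|x_k-c\|^2\,p^{\nu}(x_k|y_{1:k})\,dx_k$. Adding and subtracting the conditional mean $\hat{\varphi}_k(y_{1:k})=\mathbb{E}^{\nu}[x_k|y_{1:k}]$ and expanding the square decomposes the integrand into the conditional variance, a cross term, and $\|\hat{\varphi}_k-\varphi_k\|^2$. The cross term vanishes because $\hat{\varphi}_k-\varphi_k$ is $y_{1:k}$-measurable while $\mathbb{E}^{\nu}[(x_k-\hat{\varphi}_k)\mid y_{1:k}]=0$ by the very definition of the conditional mean. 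Hence $V_k(\varphi,\nu)$ exceeds the conditional variance by the nonnegative term $\mathbb{E}^{\nu}[\|\hat{\varphi}_k-\varphi_k\|^2]$, which is annihilated precisely by the choice $\varphi_k=\hat{\varphi}_k$.

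Finally, I would invoke the duality structure of the preceding lemma: for each candidate $\nu_k$ the inner minimizer is the corresponding conditional mean, and at the equilibrium the outer maximization selects the worst-case weights $\nu_k^*$ of Theorem \ref{main.theorem}. Thus the operative density is $p^{\nu^*}$, the optimal estimator is \eqref{opt.estimator}, and the attained minimum equals the conditional variance $\mathbb{E}^{\nu^*}[\|x_k-\hat{\varphi}_k\|^2\mid y_{1:k}]$. I expect the main obstacle to be the rigorous justification of the pointwise-to-global reduction, namely that the pointwise conditional-mean minimizer is itself an admissible, measurable element of $\Phi$ and that minimization and integration over $y_{1:k}$ may be interchanged. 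This holds under square-integrability of $\|x_k\|^2$ against $p^{\nu}$, which is automatic here since each mode-conditional density is Gaussian and the mixture has finitely many components; a short remark noting that $p^{\nu}$ is a finite Gaussian mixture, with mean and covariance given explicitly by the $\nu$-weighted combination of the mode-conditional Kalman statistics, would close this gap.
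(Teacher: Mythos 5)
Your proposal is correct and takes essentially the same route as the paper: the paper's own proof likewise rewrites $V_k(\varphi,\nu)$ as a single expectation with respect to the mixture density $p^{\nu^*}(x_k|y_{1:k})=\sum_{\theta_k}p(x_k|\theta_k,y_{1:k})\nu^*_k(\theta_k)$ and then appeals to the ``classical MMSE derivation,'' which it omits. Your write-up simply supplies that omitted completing-the-square/orthogonality argument (plus the measurability and Gaussian-mixture integrability remarks), so it is a faithful filling-in of the paper's argument rather than a different approach.
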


\begin{proof}
The result follows directly by noting that the MSE cost \eqref{mse.DRO.cost} can be written as $V_k(\varphi,\nu)=\mathbb{E}\big[||x_k-\varphi(y_{1:k})||^2\big]$
where the expectation is taken with respect to $p^{\nu^*}(x_k|y_{1:k})=\sum_{\theta_k}p(x_k|\theta_k,y_{1:k})\nu^*_k(\theta_k)$. The proof follows the classical MMSE derivation and is  omitted.
\end{proof}

We have derived the optimal solution to the DRO problem. However, obtaining its analytical solution is intractable due to exponential complexity and the need to evaluate indefinite integrals involved in the prediction and correction steps, as well as in the evaluation of \eqref{expected.term}. We therefore propose a suboptimal algorithm under a Gaussian hypothesis. 


\subsection{Suboptimal Solution}\label{suboptimal.sol}
For numerical implementation,  we propose a distributionally robust GPB (DRGPB) algorithm, which is executed by the remote estimator.

\begin{algorithm}[t]
\caption{The  DRGPB algorithm}\label{GPB1.algorithm}
 Input.   $\hat{x}^i_{0}$, $P_0^i$, $\mu_0(\theta_0=i)$, $y_k$, and $R_{TV}(\theta_k=i)\in[0,1]$.
\noindent For $k=1,2,\dots,N$ do:\\
 \noindent \textbf{Step.1}
 \begin{enumerate}
 \item[a:] Evaluate \eqref{KF.equations} to get $\hat{x}_k^i$ and $P_k^i$ for all $i\in \Theta$.
 \item[b:] Evaluate $\mu_k$ using \eqref{posterior.mode.prob}.
 \end{enumerate}
 \noindent \textbf{Step.2}
  \begin{enumerate}
 \item[a:] Evaluate $\Theta^0(k)$, $\Theta_0(k)$, and $\Theta_j(k)$ ($j=1,\dots,r$) using \eqref{avoid.int}.
 \item[b:] Evaluate $\nu^*_k$ as in \eqref{max.distr}.
 \end{enumerate}
  \noindent \textbf{Step.3}
  Evaluate the overall estimates $\hat{x}_k$ and $P_k$ as in \eqref{overall.x} and \eqref{overall.P}, respectively.
  \end{algorithm}

The DRGPB algorithm runs parallel Kalman filters, one for each mode, to compute the conditional state estimates $\hat{x}^j_{k}$ and covariances $P^j_{k}$, for all $j\in \Theta$. These estimates are then merged using the worst-case posterior mode probabilities to obtain the overall estimates $\hat{x}_{k}$ and  $P_{k}$.
For each mode $\theta_k=j\in \Theta$, we define the mode-conditioned state estimates and covariances as
\begin{align*}
\hat{x}^j_{k|k-1}&:=\mathbb{E}[x_k^j|y_{1:k-1}],\qquad \hat{x}^j_{k}:=\mathbb{E}[x_k^j|y_{1:k}],\\
P^j_{k|k-1}&:=\mathbb{E}[(x^j_k-\hat{x}^j_{k|k-1})(x^j_k-\hat{x}^j_{k|k-1})^T],\\
P^j_{k}&:=\mathbb{E}[(x^j_k-\hat{x}^j_{k})(x^j_k-\hat{x}^j_{k})^T].
\end{align*}
Then, the mode-matched Kalman filter prediction-correction equations can be summarized as follows.
\begin{subequations}\label{KF.equations}
\begin{align}
\hat{x}^j_{k|k-1}&=A(\theta_k=j)\hat{x}^j_{k-1},\\
P^j_{k|k-1}&= A(\theta_k=j)P_{k-1} A^T(\theta_k=j)\nonumber\\
& \quad +B(\theta_k=j)W_{k} B^T(\theta_k=j),\\
S_k^j&=C(\theta_k=j)P^j_{k|k-1}C^T(\theta_k=j)\nonumber\\
&\quad +D(\theta_k=j)V_kD^T(\theta_k=j),\\
K_k^j&=P_{k|k-1}^jC^T(\theta_k=j)(S_k^j)^{-1},\\
\hat{x}_k^j&=\hat{x}_{k|k-1}^j+K_k^j(y_k-C(\theta_k=j)\hat{x}^j_{k|k-1}),\\
P_k^j&=P^j_{k|k-1}-K^j_kS_k^j(K_k^j)^T.
\end{align}
\end{subequations}
For each mode $j\in \Theta$, the posterior nominal mode probability is given by
\begin{equation}\label{posterior.mode.prob}
\mu_k(\theta_k=j)=\frac{{\cal N}(\tilde{y}^j_k;0,S_k^j)\sum_{i=1}^{n_\theta}\pi_{ij}\mu_{k-1}(\theta_{k-1}{=}i)}{\sum_{j=1}^{n_\theta}{\cal N}(\tilde{y}^j_k;0,S_k^j)\sum_{i=1}^{n_\theta}\pi_{ij}\mu_{k-1}(\theta_{k-1}{=}i)},
\end{equation}
where $\tilde{y}^j_k=y_k-C(\theta_k=j)\hat{x}^j_{k|k-1}$.

In the classical GPB algorithm, the overall estimate $\hat{x}_k$ and covariance $P_k$ are obtained by 
\begin{align*}
\hat{x}_k&=\sum_{j=1}^{n_\theta}\mu_k(\theta_k=j)\hat{x}_k^j,\\
P_k&=\sum_{j=1}^{n_\theta}\mu_k(\theta_k=j)[P^j_{k}+(\hat{x}_{k}-\hat{x}_{k}^j)(\hat{x}_{k}-\hat{x}_{k}^j)^T].
\end{align*}
This corresponds to assuming that the computed $\mu_k$ coincides with the true posterior distribution.
In the proposed DRGPB algorithm, the same structure is preserved, but the overall estimate $\hat{x}_k$ and covariance $P_k$ are computed using the worst-case posterior mode probabilities $\nu^*_k$. By Theorem \ref{main.theorem}, we have that
\begin{equation*}
V_k(\varphi,\nu^*)=\sum_{i=1}^{n_\theta} \nu^*_k(\theta_k=i)L_k^\varphi (\theta_k=i)=\sum_{i=1}^{n_\theta}\mbox{Tr}(P_k^i).
\end{equation*}
When $R_{TV}=0$,  $\nu^*_k(\theta_k=j)=\mu_k(\theta_k=j)$. Therefore, the mode sets $\Theta^0(k)$, $\Theta_0(k)$, and $\Theta_j(k)$ ($j=1,\dots,r$) can be evaluated using
\begin{equation}\label{avoid.int}
L_k^\varphi (\theta_k=j)=\frac{\mbox{Tr}(P_k^j)}{\mu_k(\theta_k=j)},\quad \mu_k(\theta_k=j)>0,
\end{equation}
which avoids the  integration in \eqref{expected.term} over $R^{n_x}$. The resulting overall estimates  are 
\begin{align}
\hat{x}_{k}&=\sum_{j\in \Theta^0(k)}\nu^*_k(\theta_k=j)\hat{x}_k^j+\sum_{j\in \Theta_0(k)}\nu^*_k(\theta_k=j)\hat{x}_k^j\nonumber\\
&\quad +\sum_{s=1}^r\sum_{j\in \Theta_s(k)}\nu^*_k(\theta_k=j)\hat{x}_k^j,\label{overall.x}\\
P_{k}&=\sum_{j\in \Theta^0(k)}\nu^*_k(\theta_k=j)\bar{P}_k^j+\sum_{j\in \Theta_0(k)}\nu^*_k(\theta_k=j)\bar{P}_k^j\nonumber\\
&\quad +\sum_{s=1}^r\sum_{j\in \Theta_s(k)}\nu^*_k(\theta_k=j)\bar{P}_k^j,\label{overall.P}
\end{align}
where $\bar{P}_k^j=P^j_{k}+(\hat{x}_{k}-\hat{x}_{k}^j)(\hat{x}_{k}-\hat{x}_{k}^j)^T$.
The DRGPB algorithm is summarized in Algorithm \ref{GPB1.algorithm}.
Note that, to employ Algorithm \ref{GPB1.algorithm}, the TVD parameter $R_{TV}(\theta_k)$ must be specified for each $\theta_k\in \Theta$.

\section{Example}
\label{sec:sim.results}
Consider a MJLS with two states $n_x = 2$ and two  modes $n_\theta=2$. 
The system parameters for each mode are:
\begin{itemize}
    \item Mode 1:
\end{itemize}
\begin{equation*}
A_1{=} \begin{pmatrix}
                                0.99 & 0.05 \\
                                -0.10 & 0.95 
\end{pmatrix}, \ 
B_1{=} I, \ C_1{=} \begin{pmatrix}
                                1 & 0.5 \\
                                1 & 1
\end{pmatrix},  \
D_1{=}I.
\end{equation*}
\begin{itemize}
    \item Mode 2:
\end{itemize}
\begin{equation*}
A_2{=} \begin{pmatrix}
                                0.65 & 0.09 \\
                                -0.35 & 0.10
\end{pmatrix},  \
B_2{=} I,\
C_2{=} \begin{pmatrix}
                                1 & 0.5 \\
                                0.5 & 1
\end{pmatrix},  \ 
D_2{=} I.
\end{equation*}
The initial state and noise processes are  $x_0\sim {\cal N}(x_0;0,I)$, $w_k\sim {\cal N}(w_k;0,I)$ and  $v_k\sim {\cal N}(v_k;0,I)$, where $I=\mbox{diag}(1,1)$. The MC initial distribution is given by $[0.4\ 0.6]^T$.

In this example, the system is monitored by a remote estimator that has access only to the observation sequence $y_{1:k}$. Although the simulations do not explicitly model packet drops or channel noise, the estimator computes both the state estimate and the posterior mode probabilities using only the received measurements.
The uncertainty in the posterior mode probabilities is assumed to arise  from uncertainty in the MC transition probabilities.  Two  scenarios are considered:
\begin{itemize}
\item \emph{Scenario I - Low uncertainty:}
For  $1 \leq k< 30$, the true and nominal transition probability matrices are 
\begin{equation*}
\Pi^{\text{true}}(k){=} \begin{pmatrix}
                                0.65 & 0.35 \\
                                0.40 & 0.60 
\end{pmatrix},\
\Pi^{\text{nom}}(k){=} \begin{pmatrix}
                                0.60 & 0.40 \\
                                0.45 & 0.55 
\end{pmatrix}.
\end{equation*}
\item \emph{Scenario II - High uncertainty:} For $70 \leq k\leq N$, the true and nominal  transition probability matrices are 
\begin{equation*}
\Pi^{\text{true}}(k){=} \begin{pmatrix}
                                0.15 & 0.85 \\
                                0.05 & 0.95
\end{pmatrix},\
\Pi^{\text{nom}}(k){=} \begin{pmatrix}
                                0.95 & 0.05 \\
                                0.90 & 0.10 
\end{pmatrix}.
\end{equation*}
\end{itemize}

We compare the solutions obtained from the classical state estimation problem and the distributionally robust state estimation problem. Fig. \ref{fig1}  depicts the estimation results  from a single realization of the hybrid state and observation process for $N=100$. In the simulations, DRGPB Algorithm \ref{GPB1.algorithm} was run using the nominal MC transition probability matrix $\Pi^{nom}(k)$, while the modes transitions $\theta_k$ were generated using the true  transition matrix $\Pi^{true}(k)$. 

Fig. \ref{fig1}\subref{fig_state_nom1}-\subref{fig_state_nom2} depict the hidden state $x_k$ and the corresponding state estimate $\hat{x}_k$ for $R_{TV}(k)=0$, $\forall k$. In this case, $\nu_k^*(\theta_k)=\mu_k(\theta_k)$, and the distributionally robust state estimation problem \eqref{DRO1} reduces to the classical formulation without uncertainty. Consequently, the DRGPB Algorithm \ref{GPB1.algorithm} is equivalent to the standard GPB algorithm.  
The impact of uncertainty on the estimator's performance  is clearly visible in Fig. \ref{fig1}\subref{fig_error}.
In contrast, Fig. \ref{fig1}\subref{fig_state_true1}-\subref{fig_state_true2} depict $x_k$ and  $\hat{x}_k$ when Algorithm \ref{GPB1.algorithm} is applied. The improved  estimation performance is evident in Fig. \ref{fig1}\subref{fig_error}.

The true Markov state is plotted in Fig. \ref{fig1}\subref{fig_prob} (top plot). Both the nominal and maximizing posterior mode probabilities are plotted in Fig. \ref{fig1}\subref{fig_prob}  for the two operating modes (middle and bottom plots). For $1\leq k< 30$ (low uncertainty scenario I), the maximizing posterior mode probability $\nu^*_k(\theta_k)$ closely follows the nominal probabilities $\mu_k(\theta_k)$. In contrast, during the high uncertainty scenario II ($70 \leq k\leq N$),  $\nu^*_k(\theta_k)$ reflects the worst-case posterior distribution within the ambiguity set, enabling more accurate identification of the true mode $\theta_k$.

\begin{figure*}[htbp]
\centering
\subfloat[][]{
\label{fig_state_nom1} 
\includegraphics[ width=.29\linewidth]{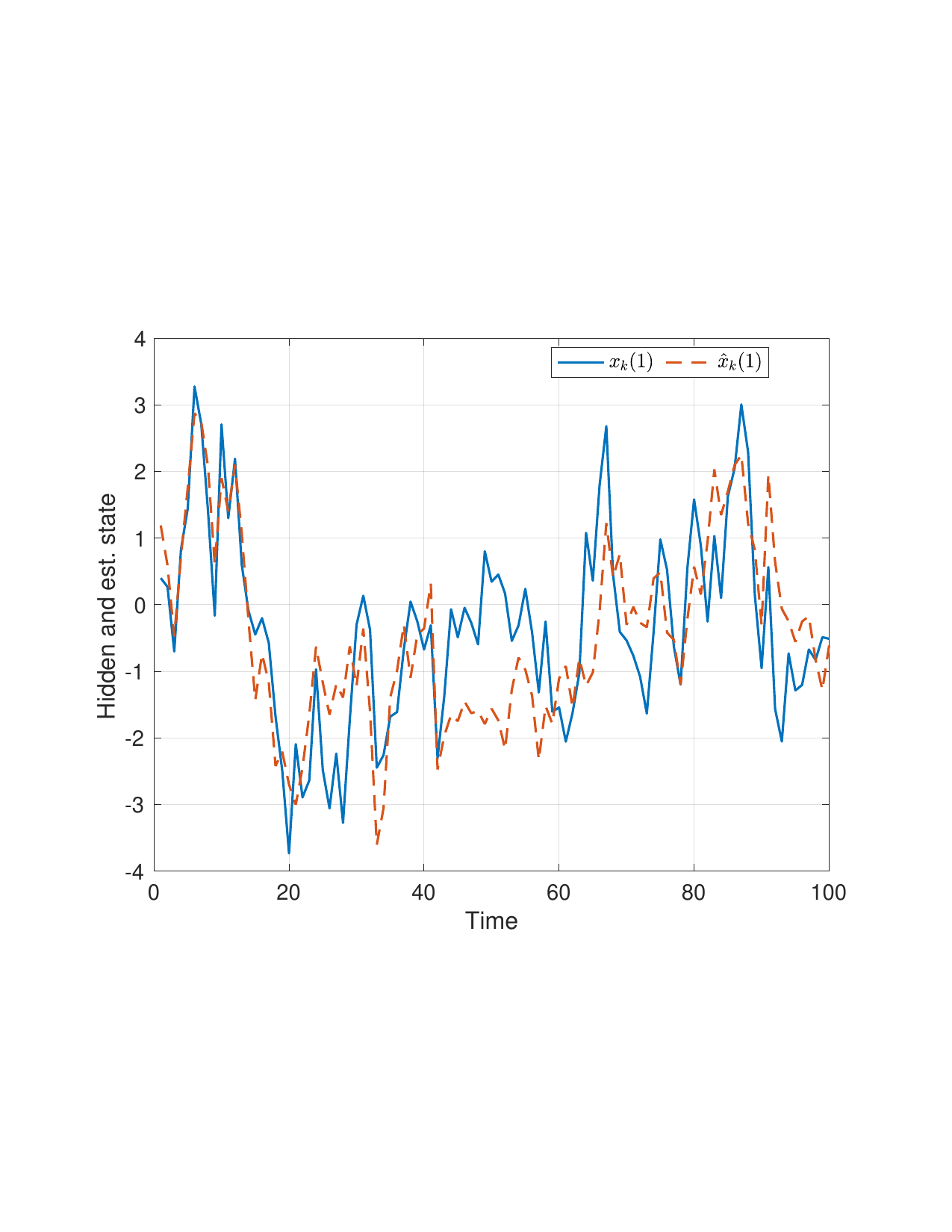}}
\subfloat[][]{
\label{fig_state_nom2} 
\includegraphics[ width=.29\linewidth]{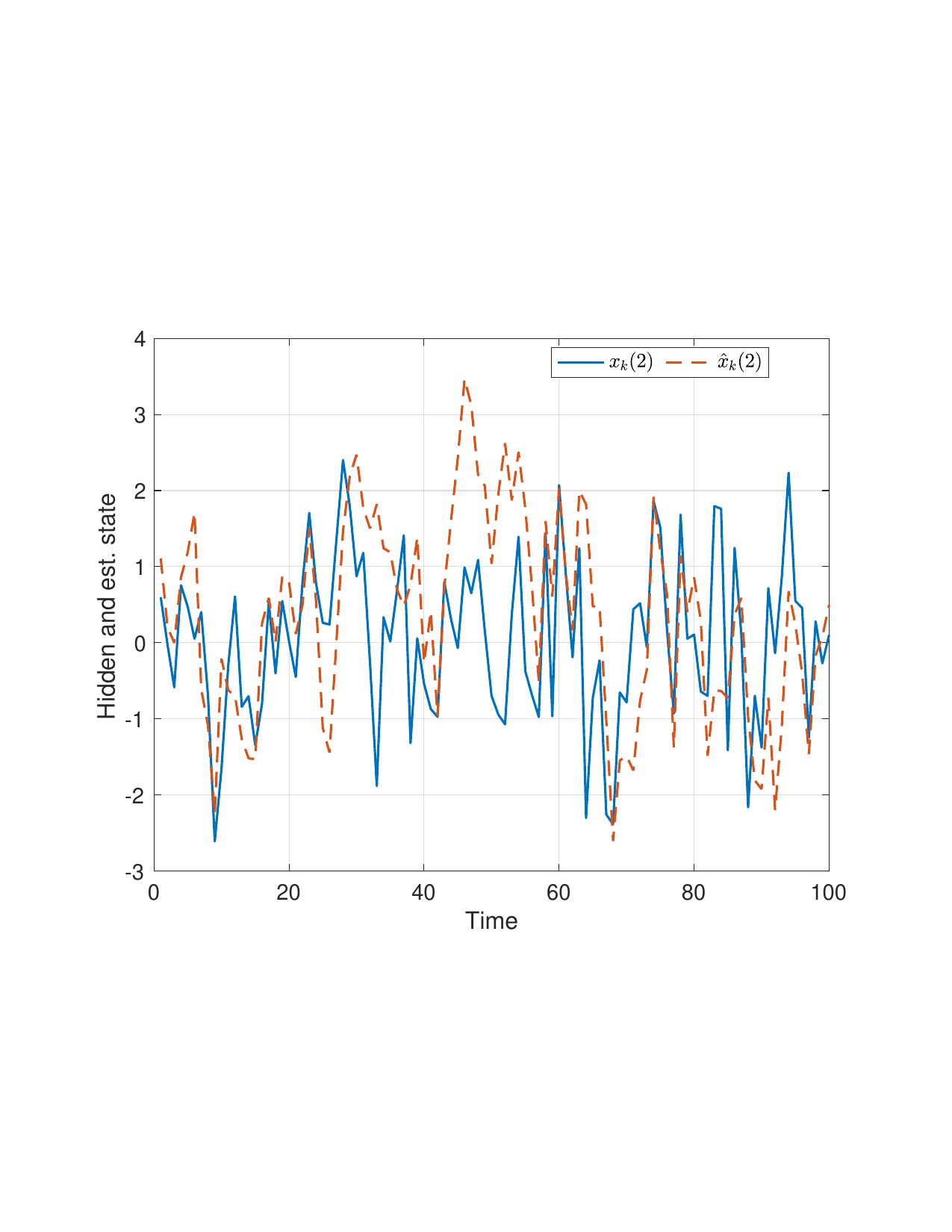}}
\subfloat[][]{
\label{fig_prob} 
\includegraphics[ width=.29\linewidth]{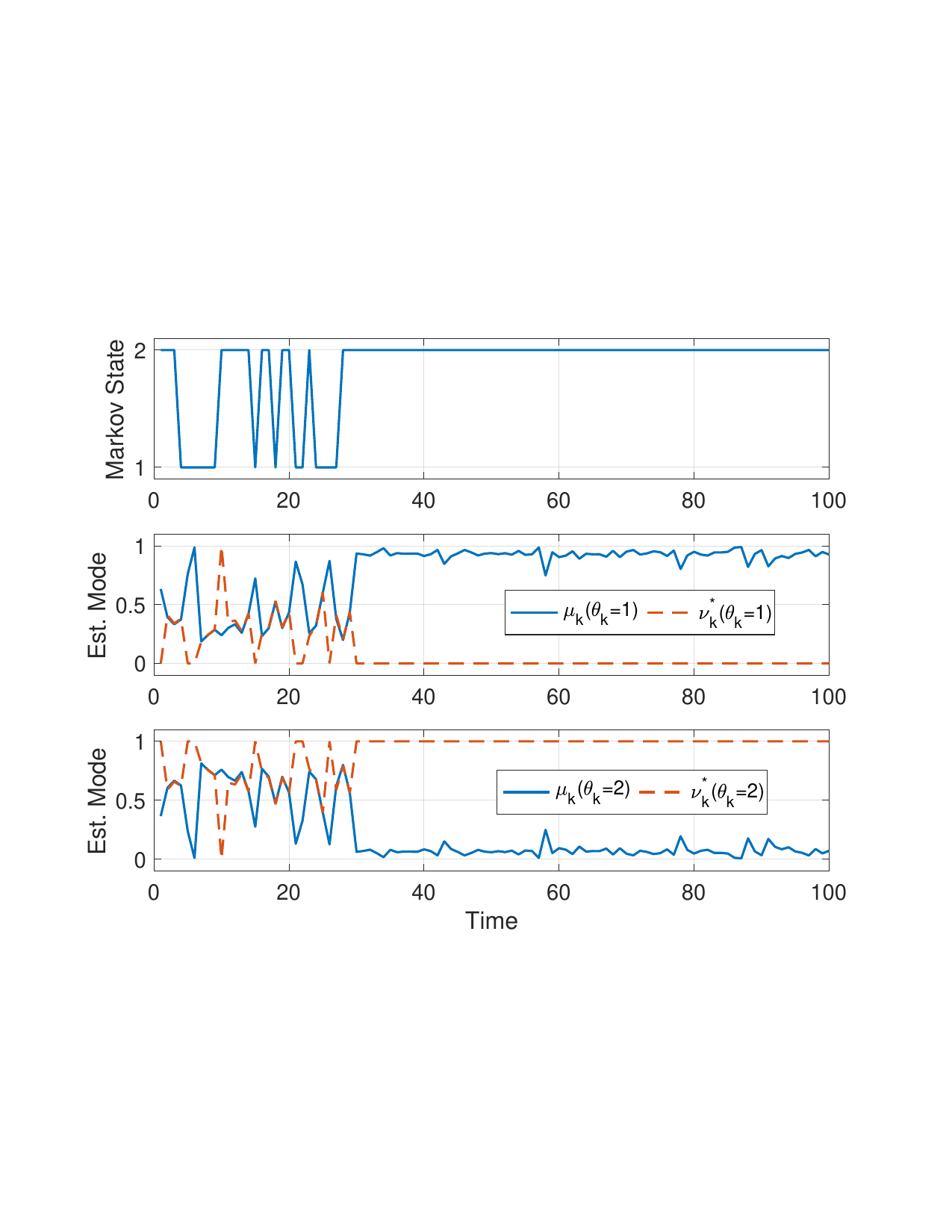}}\\
\subfloat[][]{
\label{fig_state_true1} 
\includegraphics[width=.29\linewidth]{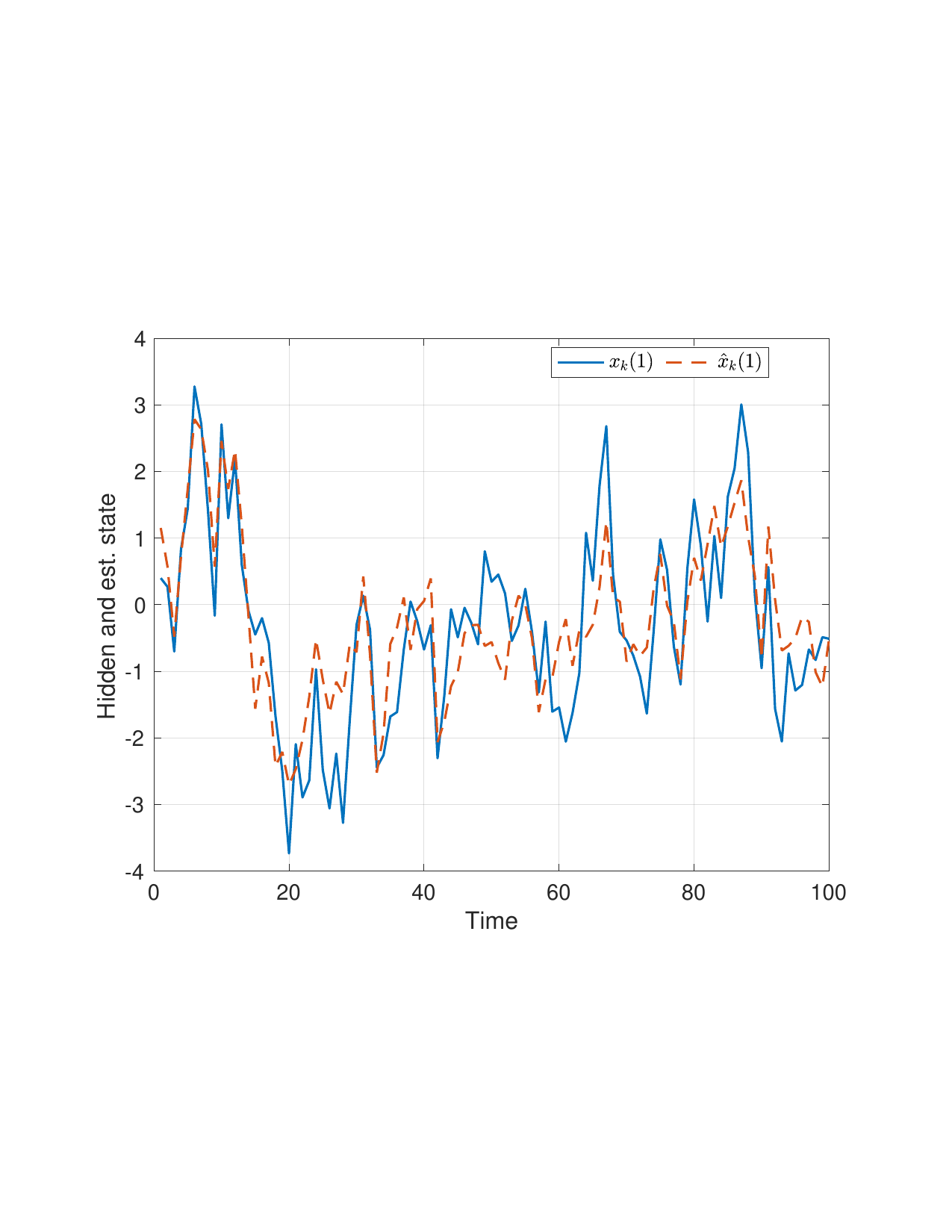}}
\subfloat[][]{
\label{fig_state_true2} 
\includegraphics[width=.29\linewidth]{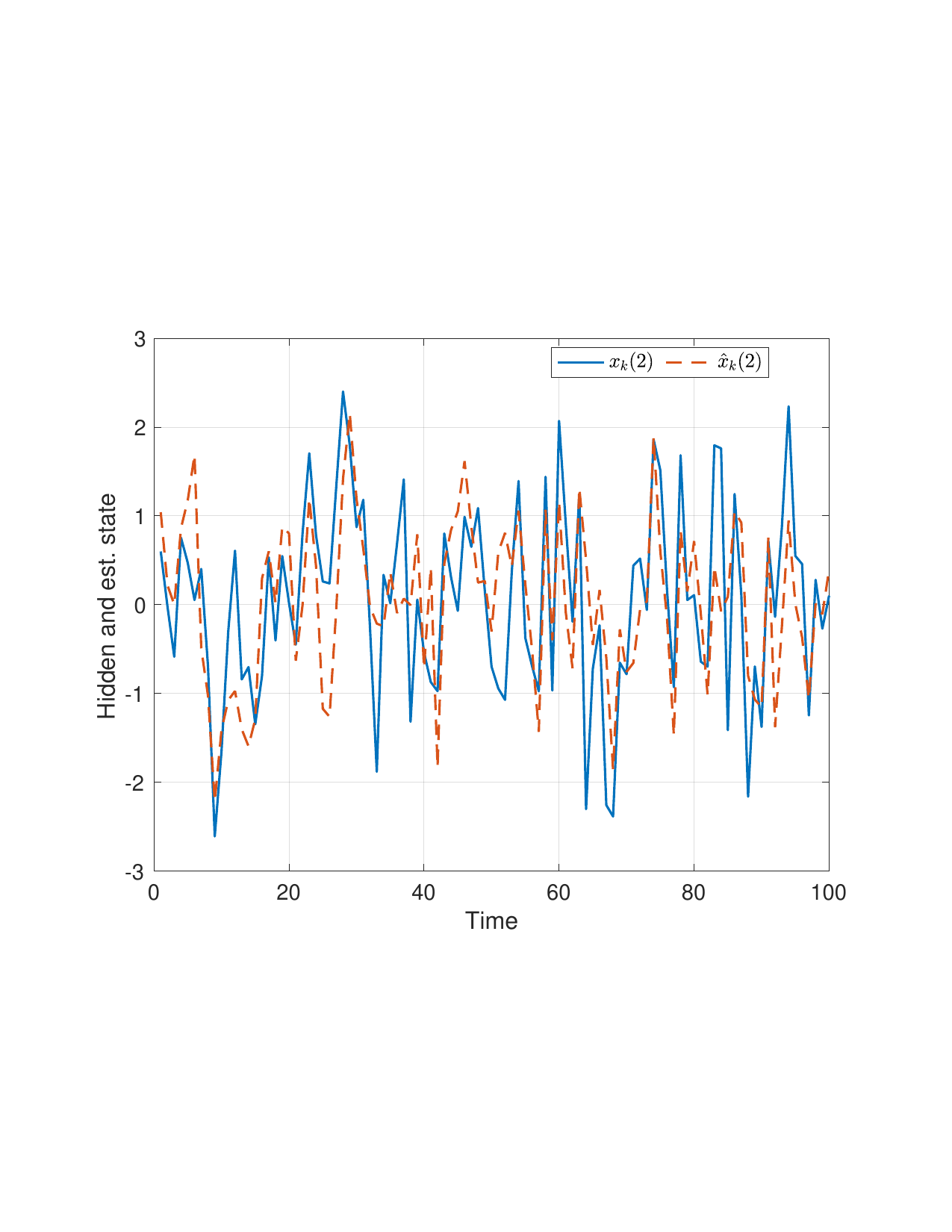}}
\subfloat[][]{
\label{fig_error} 
\includegraphics[ width=.29\linewidth]{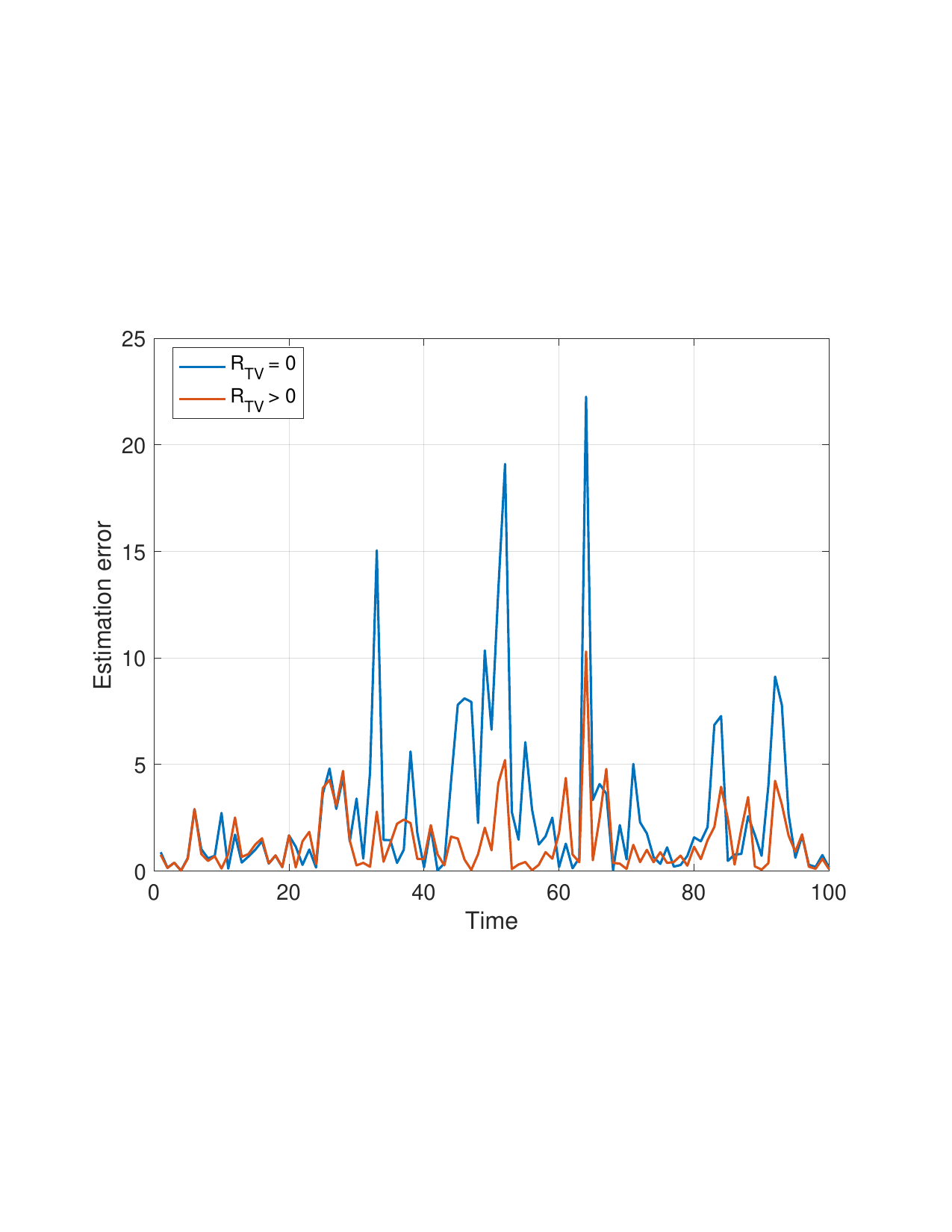}}
\caption{Simulation results comparing the classical and distributionally robust state estimation. Plots (a)--(b) and (d)--(e) depict the hidden state $x_k$ and the state estimate $\hat{x}_k$ for $R_{TV}=0$ and $R_{TV}>0$, respectively.  Plot (c) depicts the true Markov state $\theta_k$ (top), along with the nominal posterior mode probabilities $\mu_k(\theta_k)$ and the maximizing posterior mode probabilities $\nu_k^*(\theta_k)$ for the two operating modes (middle and bottom). Plot (f) depicts the estimation error for $R_{TV}=0$ and $R_{TV}>0$.}
\label{fig1}
\end{figure*}

\section{Conclusions}
\label{sec:conclusions}
We proposed a distributionally robust approach for remote estimation of MJLS under uncertainty in the posterior mode probabilities. TVD metric was used to quantify uncertainty by defining ambiguity sets centered at the estimator’s nominal posterior.
The resulting  estimator extends the classical solution with correction terms that  mitigate the impact of incorrect posterior mode probabilities. For numerical implementation, we introduced a distributionally robust variant of the widely used  GPB algorithm.
Simulation results confirmed that the proposed approach limits the influence of uncertainty and preserves estimation performance, even under significant mismatch in the mode probabilities.

\bibliographystyle{IEEEtran}
\bibliography{autosam}


\end{document}